\def\NAT@def@citea{\def\@citea{\NAT@separator}}
\theoremstyle{plain}
\newtheorem{theorem}{Theorem}[section]
\newtheorem{corollary}[theorem]{Corollary}
\newtheorem{assumption}[theorem]{Assumption}
\theoremstyle{definition}
\newtheorem{problem}[theorem]{Problem}
\theoremstyle{remark}
\newtheorem{remark}{Remark}
\newcommand{\calA}{{\mathcal A}}
\newcommand{\calF}{{\mathcal F}}
\newcommand{\calN}{{\mathcal N}}
\newcommand{\bbE}{{\mathbb E}}
\newcommand{\bbP}{{\mathbb P}}    
\newcommand{\bbR}{{\mathbb R}}
\newcommand{\bbU}{{\mathbb U}}    
\newcommand{\bbW}{{\mathbb W}}    
\newcommand{\bbX}{{\mathbb X}}    
\newcommand{\bbZ}{{\mathbb Z}}
\newcommand{\bmy}{{\mathbf y}}
\newcommand{\diag}{\mathop{\rm diag}\nolimits}
\newcommand{\rmd}{{\rm d}}
\newcommand{\sfw}{{\sf w}}
\newcommand{\bbra}[1]{\ensuremath{[\![#1]\!]} }  
\newcommand{\pdf}{\rho}
\newcommand{\kl}[2]{D_{\rm KL} \left({#1}\|{#2}\right)}
\newcommand{\ft}{N}
\newcommand{\magenta}[1]{{\color{black}{#1}}} 
\begin{document}


\title{Kullback-Leibler control for discrete-time nonlinear systems\\on continuous spaces}

\author{
\name{Kaito Ito\textsuperscript{a}\thanks{CONTACT Kaito Ito. Email: ito.kaito@bode.amp.i.kyoto-u.ac.jp} and Kenji Kashima\textsuperscript{a}}
\affil{\textsuperscript{a}Graduate School of Informatics, Kyoto University,
Kyoto, Japan}
}

\maketitle

\begin{abstract}
  Kullback-Leibler (KL) control enables efficient numerical methods for nonlinear optimal control problems. The crucial assumption of KL control is the full controllability of the transition distribution. However, this assumption is often violated when the dynamics evolves in a continuous space.
  Consequently, applying KL control to problems with continuous spaces requires some approximation, which leads to the lost of the optimality.
  To avoid such approximation, in this paper, we reformulate the KL control problem for continuous spaces so that it does not require unrealistic assumptions.
  The key difference between the original and reformulated KL control is that the former measures the control effort by KL divergence between controlled and {\em uncontrolled} transition distributions while the latter replaces the uncontrolled transition by a {\em noise-driven} transition.
  We show that the reformulated KL control admits efficient numerical algorithms like the original one without unreasonable assumptions.
  \magenta{Specifically}, the associated value function can be computed by using a Monte Carlo method based on its path integral representation.
\end{abstract}

\begin{keywords}
Optimal control; Markov decision process; discrete-time nonlinear systems
\end{keywords}

\section{Introduction}
Optimal control theory is a powerful mathematical tool for achieving control objectives while considering, for example, energy efficiency and sparsity of control~\cite{Lewis2012,Ito2021}.
Optimal control problems arise in a variety of physical, biological, and economic systems, to name a few. 
Recently, optimal control has also become increasingly important in machine learning~\cite{Liu2019,Recht2019}.
It is well-known that finding an optimal feedback control law boils down to solving the (Hamilton-Jacobi) Bellman equation~\cite{Hernandez1996,Yong1999}, which suffers from the curse of dimensionality and is difficult to solve in general.

On the other hand, a special class of stochastic optimal control problems was introduced in~\cite{Kappen2005linear,Todorov2006}, in which the associated Bellman equation can be converted into a linear equation resulting in efficient numerical methods.
For continuous state/input spaces and continuous time, the work~\cite{Kappen2005linear} considers a control-affine diffusion with a quadratic control cost and assumes the noise and control act in the same subspace.
Then, the optimal control admits a path integral representation, which can be approximated by forward sampling of an uncontrolled diffusion process. This stochastic control framework is called a path integral control and has many applications, e.g., reinforcement learning~\cite{Theodorou2010,Theodorou2010b}, model predictive control~\cite{Williams2017}, multi-agent systems~\cite{Van2008}, controllability quantification~\cite{Kashima2016}.

For discrete-time cases, the work~\cite{Todorov2006} deals with general dynamics and makes the key assumptions as follows: (A1) the controller can change the distribution of the next state given the current state as desired; (A2) the control cost is quantified by the Kullback-Leibler (KL) divergence between the controlled and uncontrolled state distributions.
This formulation is referred to as linearly-solvable Markov decision processes (MDPs) or KL control.
The KL control framework shares the nice properties with the path integral control including a path integral representation of the KL optimal control~\cite{Todorov2009}, compositionality of optimal control laws~\cite{TOdorov2009compositionality}, and duality with Bayesian inference~\cite{Todorov2008}.
For the connections of the path integral control and KL control, see~\cite{Theodorou2012}.
Moreover, the special structure of KL control enables the convex formulation of inverse reinforcement learning~\cite{Dvijotham2010}.

However, it should be emphasized that the assumption (A1) of KL control is too restrictive in practice, especially for continuous state spaces. Indeed, as mentioned in \cite{Rawlik2012}, even for discrete-time linear systems driven by Gaussian noise, (A1) is violated because the variance of the one step transition distribution given the current state is uncontrollable under the causality of controllers.
Therefore, applying KL control to practical problems with continuous spaces requires some approximation, which leads to the lost of the optimality.
For instance, if the system of interest is derived from the Euler-Maruyama discretization of a control-affine diffusion, using a smaller time step results in the smaller approximation error~\cite{Todorov2009eigen,Zhong2011}.
However, to the best of our knowledge, there is no discussion of approximation in other cases, e.g., the system originally evolves in discrete time.

{\it Contribution:}
In this paper, we reformulate KL control for continuous state spaces so that its assumption is more realistic than the conventional formulation of KL control.
This enables us to apply KL control to discrete-time and continuous space problems without any approximation of dynamics.
As a byproduct, we reconsider what the assumption (A1) implies for practical problems.
Specifically, we clarify that (A1) essentially requires the controller to know the value of noise to be injected to the system together with control inputs.
Moreover, we show that our KL control formulation enjoys the nice properties which the original one has as mentioned above.

{\it Organization:}
The remainder of this paper is organized as follows. In Section~\ref{sec:review}, we briefly review KL control.
In Section~\ref{sec:problem}, we reformulate KL control for continuous spaces. Section~\ref{sec:analysis} is devoted to the general analysis of the reformulated KL control. In Section~\ref{sec:LQG}, we focus on linear systems with a quadratic state cost.
In Section~\ref{sec:example}, numerical examples are presented.
Some concluding remarks are given in Section~\ref{sec:conclusion}.

{\it Notation:}
Let $ \bbR $ denote the set of real numbers and $ \bbZ_{>0} $ (resp. $ \bbZ_{\ge 0} $) denote the set of positive (resp. nonnegative) integers.
The set of integers $ \{0,1,\ldots,N\} $ is denoted by $ \bbra{N} $.
The identity matrix is denoted by $ I $, and its dimension depends on the context.
For a matrix $ A\in \bbR^{n\times n} $, we write $ A \succ 0$ if $ A $ is symmetric and positive definite. 
The determinant of a square matrix $ A $ is denoted by $ {\rm det}(A) $.
The block diagonal matrix with diagonal entries $ \{A_i\}_{i=1}^N, A_i \in \bbR^{m\times n} $ is denoted by $ \diag (A_1,\ldots,A_N) $.
Let $(\Omega, \calF, \bbP)$ be a complete probability space where $ \calF $ is the $ \sigma $-field on $ \Omega $, and $ \bbP : \calF \rightarrow [0,1] $ is a probability measure. 
The space $(\Omega, \calF, \bbP)$ is equipped with a natural filtration $\{\calF_k\}_{k\ge 0}$. The expectation is denoted by $\bbE$.
The probability density function of a continuous random variable $ x $ with respect to the Lebesgue measure is denoted by $ \pdf_x $.
The conditional density of $ x $ given $ y = \bmy $ is denoted by $ \pdf_{x|y} (\cdot | \bmy) $.
Denote by $ \kl{\pdf_x}{\pdf_y} $ the KL divergence between probability densities $ \pdf_x $ and $ \pdf_y $.
The Dirac delta function is denoted by $ \delta(\cdot) $.
For an $ \bbR^n $-valued random vector $ w $, $ w\sim \calN(\mu,\Sigma) $ means that $ w $ has a multivariate Gaussian distribution with mean $ \mu \in \bbR^n $ and covariance matrix $ \Sigma $. When $ \Sigma \succ 0 $, the density function of $ w\sim \calN (\mu,\Sigma) $ is denoted by $ \calN (\cdot | \mu, \Sigma) $.

\section{Brief introduction of KL control}\label{sec:review}
Here, we briefly review KL control~\cite{Todorov2006}.
Let $ \bbX \subseteq \bbR^n $ be a state space and $ \bbU \subseteq \bbR^m $ an input space.
Consider an MDP with a transition density function $ \pdf_{x_{k+1}|x_k, u_k} $ where $ \{x_k\} $ is an $ \bbX $-valued state process, and $ \{u_k\} $ is a $ \bbU $-valued control process.
In this section, we implicitly assume the existence of probability density functions. Nevertheless, we can apply the same argument for discrete random variables by replacing densities by probabilities.
Let $ \pdf_{x_0} $ be the density of the initial state $ x_0 $.
Denote by $ \pdf_{k+1}^{\pi_k} (\cdot | x) $ the conditional density of $ x_{k+1} $ given $ x_k = x $ induced by a stochastic policy (control law) $ \pi_k (\cdot | x) := \pdf_{u_k|x_k} (\cdot | x ) $. Let $ \pdf_{k+1}^0 (\cdot|x) := \pdf_{x_{k+1}|x_k,u_k}(\cdot | x, 0) $ be the transition density for the uncontrolled dynamics.
Then, the KL control problem is formulated as follows.
\begin{problem}\label{prob:conventional_KL}
  Given a finite horizon $ \ft \in \bbZ_{>0} $, find policies $ \{\pi_k\}_{k=0}^{\ft-1} $ that solve
  \begin{equation}\label{eq:conventional_KLcontrol}
    \begin{aligned}
      &\underset{\{\pi_k\}_{k=0}^{\ft-1}}{\rm minimize} ~~ \bbE \biggl[ \ell_\ft (x_\ft) + \sum_{k=0}^{\ft-1} \Bigl\{ \ell_k (x_k) + \kl{\pdf_{k+1}^{\pi_k} (\cdot | x_k)}{\pdf_{k+1}^0 ( \cdot | x_k)} \Bigr\} \biggr],
      \end{aligned}
  \end{equation}
  where $ \ell_k : \bbR^n \rightarrow \bbR $ is the running cost ($ k = 0,\ldots,\ft-1 $) and terminal cost ($ k = \ft $) for the state, respectively.
  \hfill$ \diamondsuit $
\end{problem}
Here, we assume that under $ u_k = 0, \forall k $, $ \bbE[\ell_k (x_k)] $ takes finite values for all $ k\in \bbra{N} $.
KL divergence measures the difference between two probability distributions.
Hence, Problem~\ref{prob:conventional_KL} penalizes the deviation of the transition density $ \pdf_{k+1}^{\pi_k} (\cdot | x_k) $ from the uncontrolled transition density $ \pdf_{k+1}^0 (\cdot | x_k) $.

Now, we introduce the most important assumption of KL control.
\begin{assumption}\label{ass:conventional_KL}
  For any $ x \in \bbX, \ k\in \bbra{\ft-1} $ and any density $ \check{\pdf} $ on $ \bbX $, there exists a policy $ \pi_k $ such that $ \check{\pdf} (x') = \pdf_{k+1}^{\pi_k} (x' | x)$ for all $ x'\in \bbX $.
  \hfill $ \diamondsuit $
\end{assumption}
The above assumption says that the controller can change the transition density as desired.
Under this assumption, the Bellman equation for \eqref{eq:conventional_KLcontrol} becomes linear by an exponential transformation:
\begin{align}
  &z(k,x) = \exp (-\ell_k (x)) \calA_{\pdf_{k+1}^0} [z] (k,x), \  k \in \bbra{N-1}, \ x\in \bbX, \label{eq:conventional_Z_bellman} \\
  &z(\ft,x) = \exp (-\ell_\ft (x)), \ x\in \bbX , \label{eq:conventional_Z_bellman_terminal}
\end{align}
where $ \calA_{\pdf_{k+1}^0} [z] (k,x) := \int_\bbX z(k+1,x') \pdf_{k+1}^0 (x'|x) \rmd x' $.
The solution of \eqref{eq:conventional_Z_bellman},~\eqref{eq:conventional_Z_bellman_terminal} is given by the so-called desirability function $ z(k,x) := \exp(-v(k,x)) $, and the value function $ v $ associated with \eqref{eq:conventional_KLcontrol} is defined by
\begin{align*}
  &v(k,x) := \inf_{\{\pi_s\}_{s=k}^{\ft-1} } \bbE \biggl[ \ell_\ft (x_\ft) + \sum_{s=k}^{\ft-1} \Bigl\{ \ell_s (x_s) + \kl{\pdf_{s+1}^{\pi_s} (\cdot | x_s)}{\pdf_{s+1}^0 ( \cdot | x_s)} \Bigr\} \biggl| \ x_k = x \biggr], \\
  &\hspace{12cm} k \in \bbra{\ft} ,\ x \in \bbX .
\end{align*}
In particular, a policy $ \pi_k^* $ satisfying
\begin{equation}\label{eq:conventional_optimality}
  \pdf_{k+1}^{\pi_k^*} (x' | x) = \frac{\pdf_{k+1}^0(x' | x) z(k+1,x')}{\calA_{\pdf_{k+1}^0} [z](k,x)}, \ \forall x',x \in \bbX
\end{equation}
is an optimal policy, and its existence is ensured by Assumption~\ref{ass:conventional_KL}.
It is remarkable that an optimal transition density can be written analytically given the desirability function unlike the conventional MDPs~\cite{Hernandez1996}.
However, as mentioned in the Introduction, Assumption~\ref{ass:conventional_KL} is typically violated for continuous state spaces, and there is no policy satisfying~\eqref{eq:conventional_optimality}.

\section{Reformulation of KL control for continuous spaces}\label{sec:problem}
In this section, we reformulate KL control to make its assumption more realistic for continuous spaces. The same notation as in Section~\ref{sec:review} is employed.
In this paper, we consider general nonlinear systems of the form:
\begin{align}
  &x_{k+1} = f(x_k, u_k), \ k \in \bbZ_{\ge 0}, \label{eq:system}\\
  &x_0 \sim \pdf_{x_0}, \label{eq:system_initial}
\end{align}
where $ \{x_k\} $ is an $ \bbX $-valued state process, $ \{u_k\} $ is a $ \bbU $-valued control process, and $ f : \bbX\times \bbU \rightarrow \bbX $.
Next, we introduce the associated noise-driven dynamics:
\begin{align}
  &\bar{x}_{k+1} = f(\bar{x}_k, w_k), \ k\in \bbZ_{\ge 0}, \label{eq:noise_driven} \\
  &\bar{x}_0 \sim \pdf_{x_0},
\end{align}
where $\{w_k\} $ is a sequence of independent (not necessarily identically distributed) random variables, and $ w_k $ has the density function $ \pdf_{w_k} $ with the support $ \bbW $.
Denote the conditional density of $\bar{x}_{k+1}$ given $ \bar{x}_k = x $ by $ \bar{\pdf}_{k+1} (\cdot |  x) $.
Now, we are ready to state our problem.
\begin{problem}\label{prob:KLcontrol}
  Given a finite horizon $ \ft \in \bbZ_{>0} $, find policies $ \{\pi_k\}_{k=0}^{\ft -1} $ that solve
  \begin{equation}\label{eq:cost}
    \begin{aligned}
      &\underset{\{\pi_k\}_{k=0}^{\ft-1}}{\rm minimize} ~~ \bbE \biggl[ \ell_\ft (x_\ft) + \sum_{k=0}^{\ft-1} \Bigl\{ \ell_k (x_k) + \kl{\pdf_{k+1}^{\pi_k} (\cdot | x_k)}{\bar{\pdf}_{k+1} ( \cdot | x_k)} \Bigr\} \biggr] \\
      &\text{subject to} ~~ \eqref{eq:system}, \eqref{eq:system_initial} .
      \end{aligned}
  \end{equation}
  \hfill $ \diamondsuit $
\end{problem}

We emphasize that Problem~\ref{prob:KLcontrol} employs {\em noise-driven} dynamics ($ u_k = w_k $) as a reference transition density $ \bar{\pdf}_{k+1} (\cdot | x_k) $ while Problem~\ref{prob:conventional_KL} employs {\em uncontrolled} dynamics ($ u_k = 0 $).
Note that for a deterministic policy $ u_k = K (x_k) $, i.e., $ \pi_k (u | x) = \delta (u - K(x)) $, $\kl{\pdf_{k+1}^{\pi_k} (\cdot | x_k) }{\bar{\pdf}_{k+1} (\cdot | x_k)} $ is infinite because $ \pdf_{k+1}^{\pi_k} (\cdot | x_k)$ is not absolutely continuous with respect to $ \bar{\pdf}_{k+1} (\cdot | x_k) $.
Therefore, an optimal policy for Problem~\ref{prob:KLcontrol} must be stochastic.
This is in contrast to the conventional optimal control problems without the KL divergence cost whose optimal policy is deterministic~\cite{Hernandez1996}.

For $ x\in \bbX $, let $ f_x(u) := f(x,u) $ and $ \bbX_x := \{f_x (u) : u\in \bbU\} $.
Then, we assume the following conditions.
\begin{assumption}\label{ass:invertible}
  \hspace{\fill}
  \begin{itemize}
  \item[(i)] $ \bbW \subseteq \bbU $;
  \item[(ii)] $ m = n $;
  \item[(iii)] For all $ x\in \bbX $, $ f_x : \bbU \rightarrow \bbX_x $ is bijective and continuously differentiable. In addition, for all $ x\in \bbX $, the Jacobian matrix $ {\rm J}_{f_x^{-1}}  $ of the inverse function $ f_x^{-1} $ satisfies
  \[ 
     \left| {\rm det}\left({\rm J}_{f_x^{-1}} (x') \right) \right|  \neq 0, \ \forall x' \in \bbX_x ;
  \]
  \item[(iv)] For all $ k \in \bbra{N} $, $ \bbE[\ell_k (\bar{x}_k)] $ takes finite values.
  \hfill $ \diamondsuit $
  \end{itemize}
\end{assumption}
Assumptions~\ref{ass:invertible}-(ii),(iii) ensure the existence of the density $ \rho_{k+1}^{\pi_k} (\cdot | x) $~\cite[Chapter~6, Theorem~5]{Roussas2015}.
In addition, Assumption~\ref{ass:invertible}-(i) means that there exists a feasible control process that replicates a given noise process.
Consequently, the transition density $ \pdf_{k+1}^{\pi_k} (\cdot | x) $ can be shaped to a desired form with the support $ \{ f_x(w) : w\in \bbW \} $ by an appropriate policy; see the proof of Theorem~\ref{thm:opt_ctrl_finite} in the next section.
Therefore, Assumption~\ref{ass:invertible}-(i) corresponds to Assumption~\ref{ass:conventional_KL}.
Lastly, Assumption~\ref{ass:invertible}-(iv) is a technical assumption that ensures there exists a policy that makes \eqref{eq:cost} finite. For instance, if $ \ell_k $ is bounded for all $ k\in \bbra{N} $, Assumption~\ref{ass:invertible}-(iv) is satisfied.

\begin{remark}
Consider the control-affine case $ f(x,u) = f_0 (x) + g(x) u $ where $ f_0 : \bbR^n \rightarrow \bbR^n, \ g: \bbR^n \rightarrow \bbR^{n\times m} $. Then, Assumptions~\ref{ass:invertible}-(ii),(iii) imply that, for all $ x\in \bbR^n $, $ g(x) $ is square and invertible.
Note that when $ n < m $ and $ g(x) $ has full row rank for all $ x\in \bbR^n $, we can introduce an auxiliary system
\begin{equation}\label{eq:auxiliary}
        \tilde{x}_{k+1} = \tilde{f}_0 (\tilde{x}_k) + \tilde{g} (\tilde{x}_k) u_k
\end{equation}
where $ \tilde{x}_k \in \bbR^{m-n},\ \tilde{f}_0 : \bbR^{m-n} \rightarrow \bbR^{m-n},\ \tilde{g} : \bbR^{m-n} \rightarrow \bbR^{(m-n)\times m} $, such that the combined system
\begin{equation}
   \begin{bmatrix}
           x_{k+1} \\ \tilde{x}_{k+1}
   \end{bmatrix} = 
   \begin{bmatrix}
           f_0 (x_k) \\ \tilde{f}_0 (\tilde{x}_k)
   \end{bmatrix} + 
   \begin{bmatrix}
           g(x_k) \\ \tilde{g} (\tilde{x}_k)
   \end{bmatrix} u_k
\end{equation}
satisfies Assumptions~\ref{ass:invertible}-(ii),(iii). That is, $ [g(x)^\top \ \tilde{g}  (\tilde{x})^\top ]^\top $ is invertible for all $ [x^\top \ \tilde{x}^\top ]^\top \in \bbR^m $.
When the state cost function $ \ell_k $ does not depend on $ \tilde{x}_k $, the introduction of the auxiliary system~\eqref{eq:auxiliary} is explicitly relevant only for the KL divergence cost of \eqref{eq:cost}.
\hfill $ \diamondsuit $
\end{remark}

\section{General analysis of KL control for continuous spaces}\label{sec:analysis}
In this section, we characterize the value function and the optimal control of Problem~\ref{prob:KLcontrol} and then reconsider the implication of Assumption~\ref{ass:conventional_KL} for Problem~\ref{prob:conventional_KL}.

\subsection{\magenta{Characterization of value function and optimal control}}
Define the value function associated with \eqref{eq:cost} as follows:
\begin{align*}
  &V(k,x) := \inf_{\{\pi_s\}_{s=k}^{\ft-1} } \bbE \biggl[ \ell_\ft (x_\ft) + \sum_{s=k}^{\ft-1} \Bigl\{  \ell_s (x_s) + \kl{\pdf_{s+1}^{\pi_s} (\cdot | x_s)}{\bar{\pdf}_{s+1} ( \cdot | x_s)} \Bigr\} \biggl| \ x_k = x \biggr], \\
  &\hspace{12cm} k \in \bbra{\ft} ,\ x \in \bbX .
\end{align*}
Then the optimal value for Problem~\ref{prob:KLcontrol} is given by $ \bbE[V(0,x_0)] $.
Also, define the desirability function
\begin{equation}
  Z(k,x) := \exp (-V(k,x)) .
\end{equation}
Similarly to the conventional optimal control, the desirability function or, equivalently, the value function plays a crucial role in our problem.

\begin{theorem}\label{thm:opt_ctrl_finite}
  Suppose that Assumption~\ref{ass:invertible} holds.
  Then, the unique optimal policy $ \{\pi_k^* \} $ for Problem~\ref{prob:KLcontrol} is given by
  \begin{equation}\label{eq:opt_policy}
    \pi_k^* (u | x) := \frac{\pdf_{w_k} (u) Z(k+1, f(x,u))}{\calA_{\bar{\pdf}_{k+1}} [Z] (k, x)}, \ u\in \bbU, \ x\in \bbX .
  \end{equation}
  In addition, the desirability function $ Z $ satisfies
  \begin{align}
    &Z(k,x) = \exp (-\ell_k (x)) \calA_{\bar{\pdf}_{k+1}} [Z] (k, x), \ k \in \bbra{N-1}, \ x\in \bbX, \label{eq:Z_eq} \\
    &Z(\ft,x) = \exp (-\ell_\ft (x)), \ x\in \bbX \label{eq:Z_terminal} .
  \end{align}
\end{theorem}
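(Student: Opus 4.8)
The plan is to establish both claims simultaneously by backward induction on the dynamic programming (Bellman) recursion for the finite-horizon value function $V$. First I would record the terminal condition: directly from the definition $V(\ft,x)=\ell_\ft(x)$, so $Z(\ft,x)=\exp(-\ell_\ft(x))$, which is \eqref{eq:Z_terminal}. Taking the characterization of $Z(k+1,\cdot)$ as the induction hypothesis, I would then invoke the dynamic programming principle to write, for $k\in\bbra{N-1}$,
\[
V(k,x) = \ell_k(x) + \inf_{\pi_k}\Bigl\{ \kl{\pdf_{k+1}^{\pi_k}(\cdot|x)}{\bar{\pdf}_{k+1}(\cdot|x)} + \int_\bbX V(k+1,x')\pdf_{k+1}^{\pi_k}(x'|x)\,\rmd x'\Bigr\}.
\]

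The central step is to rewrite the per-step cost in terms of the policy density on $\bbU$ rather than the transition density on $\bbX$. Using Assumption~\ref{ass:invertible}-(ii),(iii), the bijective and continuously differentiable map $f_x$ yields the change-of-variables formulas $\pdf_{k+1}^{\pi_k}(x'|x)=\pi_k(f_x^{-1}(x')|x)\,|{\rm det}({\rm J}_{f_x^{-1}}(x'))|$ and $\bar{\pdf}_{k+1}(x'|x)=\pdf_{w_k}(f_x^{-1}(x'))\,|{\rm det}({\rm J}_{f_x^{-1}}(x'))|$ on the support $\bbX_x$. Substituting these into the KL divergence, the Jacobian factors cancel inside the logarithm of the density ratio, and the substitution $x'=f_x(u)$ turns the integral over $\bbX$ into one over $\bbW$, giving the clean identity
\[
\kl{\pdf_{k+1}^{\pi_k}(\cdot|x)}{\bar{\pdf}_{k+1}(\cdot|x)} = \kl{\pi_k(\cdot|x)}{\pdf_{w_k}}.
\]
Here Assumption~\ref{ass:invertible}-(i) is what makes minimizing over transition densities equivalent to minimizing over policies, since $\bbW\subseteq\bbU$ lets any policy absolutely continuous with respect to $\pdf_{w_k}$ be realized. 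This identity is also precisely what justifies the remark in the Introduction that the controller must effectively know the injected noise.

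Applying the same substitution to the cost-to-go term reduces the one-step minimization to a standard relative-entropy variational problem on $\bbU$. Since $Z=\exp(-V)$, I would write $\int V(k+1,f_x(u))\pi_k(u|x)\,\rmd u = -\int \pi_k(u|x)\log Z(k+1,f(x,u))\,\rmd u$ and combine it with the KL term to obtain
\[
\kl{\pi_k(\cdot|x)}{\pdf_{w_k}} + \int V(k+1,f_x(u))\pi_k(u|x)\,\rmd u = \kl{\pi_k(\cdot|x)}{\pi_k^*(\cdot|x)} - \log \calA_{\bar{\pdf}_{k+1}}[Z](k,x),
\]
where $\pi_k^*$ is the normalized Gibbs density \eqref{eq:opt_policy} and the normalizing constant equals $\int_\bbW \pdf_{w_k}(u)Z(k+1,f(x,u))\,\rmd u = \calA_{\bar{\pdf}_{k+1}}[Z](k,x)$ by the same change of variables. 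Because KL divergence is nonnegative and vanishes only when its two arguments agree almost everywhere, the infimum is attained uniquely at $\pi_k=\pi_k^*$, giving $V(k,x)=\ell_k(x)-\log\calA_{\bar{\pdf}_{k+1}}[Z](k,x)$ and hence \eqref{eq:Z_eq} after exponentiation; uniqueness of the optimal policy follows from the strict positivity of the residual KL term.

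The main obstacle I anticipate is the rigorous justification of the dynamic programming principle in the continuous-space setting: measurability and integrability of $V(k+1,\cdot)$, finiteness and positivity of $\calA_{\bar{\pdf}_{k+1}}[Z](k,x)$ (for which Assumption~\ref{ass:invertible}-(iv) and the induction hypothesis are needed), and the interchange of infimum and conditional expectation. In parallel I would need to verify that $\pi_k^*$ is a genuine probability density, that is, nonnegative, integrating to one, and absolutely continuous with respect to $\pdf_{w_k}$. Finally, the change-of-variables bookkeeping and the support condition $\{f_x(w):w\in\bbW\}$ must be tracked carefully so that the equivalence between the transition-density and policy formulations holds exactly rather than merely up to a null set.
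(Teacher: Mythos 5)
Your proposal is correct and follows essentially the same route as the paper: the dynamic programming principle, the Gibbs (tilted-density) decomposition of the one-step cost into a KL divergence plus the constant $-\log \calA_{\bar{\pdf}_{k+1}}[Z](k,x)$, and the change-of-variables formula under Assumption~\ref{ass:invertible}. The only difference is the order of the two key steps — you change variables first and decompose on the input space, obtaining $\kl{\pi_k(\cdot|x)}{\pdf_{w_k}}$ and the Gibbs density $\pi_k^*$ directly, whereas the paper decomposes on the state space, identifies the optimal transition density $\pdf_{k+1}^*$, and then uses the change of variables to show $\pi_k^*$ is the unique policy realizing it — which makes your uniqueness argument marginally more immediate but is otherwise the same proof.
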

\begin{proof}
By the dynamic programming principle, the value function $ V $ satisfies the Bellman equation
\begin{align}
  &V(k,x) = \ell_k (x) + \inf_{\pi_k} \bigl\{ \kl{\pdf_{k+1}^{\pi_k} (\cdot | x)}{\bar{\pdf}_{k+1} (\cdot | x)} + \calA_{\pdf_{k+1}^{\pi_k}} [V] (k,x) \bigr\}, \ k\in \bbra{\ft-1},\ x\in \bbX , \label{eq:Bellman_eq} \\
  &V(\ft,x) = \ell_\ft (x), \ x\in \bbX .
\end{align}
In addition, if a policy $ \pi_k $ achieves the minimum of the right-hand side of \eqref{eq:Bellman_eq}, this is an optimal policy. Note that
\begin{align}
  &\kl{\pdf_{k+1}^{\pi_k} (\cdot | x)}{\bar{\pdf}_{k+1} (\cdot | x)} + \calA_{\pdf_{k+1}^{\pi_k}} [V] (k,x) \nonumber \\
  &\quad = \int_{\bbX} \pdf_{k+1}^{\pi_k} (x' | x) \log \frac{\pdf_{k+1}^{\pi_k} (x'|x)}{\bar{\pdf}_{k+1} (x'|x) Z(k+1, x')} \rmd x' \nonumber \\
  &\quad = \kl{\pdf_{k+1}^{\pi_k}(\cdot | x)}{\pdf_{k+1}^* (\cdot | x)} - \log \calA_{\bar{\pdf}_{k+1}} [Z] (k, x) , \label{eq:inf}
\end{align}
where we defined
\begin{align}
  &\pdf_{k+1}^* (x' | x) := \frac{\bar{\pdf}_{k+1} (x'|x) Z(k+1,x')}{\calA_{\bar{\rho}_{k+1}} [Z] (k,x)}, \ x',x \in \bbX .
\end{align}
The second term in the right-hand side of \eqref{eq:inf} does not depend on $ \pi_k $.
Therefore, if a policy $ \pi_k $ satisfies
\begin{equation}\label{eq:opt_policy_condition}
  \pdf_{k+1}^{\pi_k} (x' | x) = \pdf_{k+1}^* (x' | x), \ \forall x,x' \in \bbX, 
\end{equation}
this is an optimal policy at time $ k $.
For any $ x\in \bbX $, by Assumption~\ref{ass:invertible} and the change of variables $ x' = f_x(u) $ for $ \pi_k (u | x) $~\cite[Chapter~6, Theorem~5]{Roussas2015}, we obtain
\begin{equation*}
  \pdf_{k+1}^{\pi_k} (x' | x) = \pi_k \left(f_x^{-1} (x') | x \right) \left| {\rm det}\left({\rm J}_{f_x^{-1}} (x') \right) \right|.
\end{equation*}
Similarly, we have
\begin{equation*}
  \bar{\pdf}_{k+1} (x' | x) = \pdf_{w_k} \left(f_x^{-1} (x') \right) \left| {\rm det}\left({\rm J}_{f_x^{-1}} (x') \right) \right|.
\end{equation*}
Therefore, $ \pi_k^* $ defined in \eqref{eq:opt_policy} is a unique policy satisfying \eqref{eq:opt_policy_condition}.
As a result, the Bellman equation \eqref{eq:Bellman_eq} can be simplified as
\begin{align}
        V(k,x) = \ell_k (x) - \log \calA_{\bar{\pdf}_{k+1}} [Z] (k,x) ,
\end{align}
which completes the proof.
\end{proof}

From Theorem~\ref{thm:opt_ctrl_finite}, similarly to the conventional optimal control, Problem~\ref{prob:KLcontrol} boils down to calculating the desirability function $ Z $.
A notable difference between them is that thanks to the linearity of \eqref{eq:Z_eq}, the desirability function for KL control admits the path integral representation.
\begin{corollary}\label{cor:path_integral}
  Suppose that Assumption~\ref{ass:invertible} holds.
  Then, the desirability function $ Z $ satisfies
  \begin{equation}
      Z (k, x) = \bbE\left[ \exp \left( - \sum_{s=k}^{\ft} \ell_s (\bar{x}_s)		\right)	 \middle| \ \bar{x}_k = x	\right], \label{eq:path_integral}
  \end{equation}
  where $ \{\bar{x}_s \} $ is a solution of \eqref{eq:noise_driven}.
\end{corollary}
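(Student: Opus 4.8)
The plan is to prove \eqref{eq:path_integral} by backward induction on $ k $, using the linear recursion \eqref{eq:Z_eq}--\eqref{eq:Z_terminal} for $ Z $ established in Theorem~\ref{thm:opt_ctrl_finite}. I would introduce the candidate function
\[
  Y(k,x) := \bbE\left[\exp\left(-\sum_{s=k}^{\ft}\ell_s(\bar{x}_s)\right)\middle|\ \bar{x}_k = x\right],
\]
and show that $ Y $ obeys the same recursion and terminal condition as $ Z $. Because that recursion determines its solution uniquely---given $ Z(k+1,\cdot) $, equation \eqref{eq:Z_eq} fixes $ Z(k,\cdot) $ pointwise on $ \bbX $---this immediately yields $ Y \equiv Z $, which is the claim.

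First I would settle the base case: at $ k = \ft $ the sum collapses to the single term $ \ell_\ft(\bar{x}_\ft) $, so conditioning on $ \bar{x}_\ft = x $ gives $ Y(\ft,x) = \exp(-\ell_\ft(x)) $, matching \eqref{eq:Z_terminal}. For the inductive step, I would assume the representation at time $ k+1 $, and, after conditioning on $ \bar{x}_k = x $, factor out the deterministic term $ \exp(-\ell_k(x)) $ from $ Y(k,x) $; the surviving factor is the conditional expectation of $ \exp(-\sum_{s=k+1}^{\ft}\ell_s(\bar{x}_s)) $ given $ \bar{x}_k = x $.

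The crucial step is to rewrite this conditional expectation via the Markov property of $ \{\bar{x}_s\} $. Since $ \{w_k\} $ is a sequence of independent random variables, the noise-driven process \eqref{eq:noise_driven} is a Markov chain whose one-step transition density out of state $ x $ is exactly $ \bar{\pdf}_{k+1}(\cdot|x) $. Conditioning on $ \bar{x}_{k+1} $ and invoking the inductive hypothesis, I would obtain
\[
  \bbE\left[\exp\left(-\sum_{s=k+1}^{\ft}\ell_s(\bar{x}_s)\right)\middle|\ \bar{x}_k = x\right] = \int_{\bbX} Y(k+1,x')\,\bar{\pdf}_{k+1}(x'|x)\,\rmd x' = \calA_{\bar{\pdf}_{k+1}}[Y](k,x),
\]
so that $ Y(k,x) = \exp(-\ell_k(x))\,\calA_{\bar{\pdf}_{k+1}}[Y](k,x) $, which is precisely \eqref{eq:Z_eq}.

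The only real obstacle is rigor in this tower-property step: one must justify that conditioning the future cost increment $ \sum_{s=k+1}^{\ft}\ell_s(\bar{x}_s) $ on $ \bar{x}_{k+1} $ renders it independent of $ \bar{x}_k $. This is exactly the Markov property, which follows from the independence of the $ \{w_k\} $, while Assumption~\ref{ass:invertible}-(iv) keeps the relevant expectations finite so the iterated-expectation manipulation is valid. Everything else is routine bookkeeping.
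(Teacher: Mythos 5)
Your proposal is correct and follows essentially the same route as the paper: the paper's proof is a one-line appeal to backward induction on $k$ using \eqref{eq:Z_eq} and \eqref{eq:Z_terminal}, which is exactly the argument you carry out, with the Markov-property/tower-property step made explicit. Your added detail (the noise-driven chain's Markov structure justifying $\calA_{\bar{\pdf}_{k+1}}[Y](k,x)$ in the inductive step) is precisely what the paper leaves implicit, so there is no substantive difference in approach.
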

\begin{proof}
  By using \eqref{eq:Z_eq},~\eqref{eq:Z_terminal}, and induction on $ k $, we immediately obtain the desired result.
\end{proof}

The path integral representation~\eqref{eq:path_integral} motivates us to compute the desirability function by sampling approximations.
In particular, if one can simulate sample paths of $ \{\bar{x}_k \} $, the sampling approximations of \eqref{eq:path_integral} do not require the knowledge of $ f $. Hence, \eqref{eq:path_integral} enables model-free approaches for obtaining the optimal policy.

Next, we consider the discrete input space $ \bbU = \{u^{(i)}\}_{i=1}^{r}, \ u^{(i)} \in \bbR^m, \ r\in \bbZ_{> 0} \cup \{\infty\} $.
In this case, density functions must be replaced by probabilities such as a policy $ \Pi_k (u^{(i)} | x) := \bbP(u_k  = u^{(i)} | x_k = x) $.
Then, we obtain the following.
\begin{corollary}\label{cor:discrete}
  Suppose that Assumptions~\ref{ass:invertible}-(i),(iv) hold.
  Then, for Problem~\ref{prob:KLcontrol} with $ \bbU = \{u^{(i)}\}_{i=1}^{r} $, there exists a policy $ \{\Pi_k^*\} $ such that for all $ x\in \bbX,\ x'\in \bar{\bbX}_x := \{f(x,w) : w\in \bbW\} $, it holds
  \begin{align}
    &\sum_{i : f(x,u^{(i)}) = x'} \Pi_k^* (u^{(i)} | x)  = \frac{\bbP \left( f(x, w_k) = x' \right) Z(k+1,x') }{\sum_{x'' \in \bar{\bbX}_x} \bbP \left(f(x,w_k) = x'' \right) Z(k+1,x'' )} . \label{eq:opt_policy_discrete}
  \end{align}
Here, the desirability function $ Z $ satisfies \eqref{eq:Z_eq},~\eqref{eq:Z_terminal} where $ \calA_{\bar{\pdf}_{k+1}}[Z](k,x) $ is replaced by $ \sum_{x'\in \bar{\bbX}_x} Z(k+1,x') \bbP(f (x,w_k) = x') $, and admits the representation~\eqref{eq:path_integral}. In addition, $ \{\Pi_k^*\} $ is an optimal policy for Problem~\ref{prob:KLcontrol}.
Furthermore, if for all $ x \in \bbX$, $ f_x : \bbU \rightarrow \bbX_x $ is bijective, a policy satisfying \eqref{eq:opt_policy_discrete} is a unique optimal policy.
\end{corollary}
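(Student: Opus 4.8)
The plan is to mirror the dynamic programming argument of Theorem~\ref{thm:opt_ctrl_finite} with integrals replaced by sums and densities by probability mass functions throughout; because we work directly with transition probabilities rather than recovering a density through a change of variables, the Jacobian-based Assumptions~\ref{ass:invertible}-(ii),(iii) are no longer needed, which is why only parts (i) and (iv) are assumed. First I would write the transition probability induced by a policy $ \Pi_k $ as $ P^{\Pi_k}(x'|x) = \sum_{i : f(x,u^{(i)}) = x'} \Pi_k(u^{(i)}|x) $ and the noise-driven transition as $ \bar{P}(x'|x) := \bbP(f(x,w_k) = x') $, noting that under Assumption~\ref{ass:invertible}-(i) the support $ \bbW $ of $ w_k $ lies in $ \bbU $, so each value of $ w_k $ is itself an admissible input. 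The Bellman equation then reads $ V(k,x) = \ell_k(x) + \inf_{\Pi_k} \{ \kl{P^{\Pi_k}(\cdot|x)}{\bar{P}(\cdot|x)} + \sum_{x'} P^{\Pi_k}(x'|x) V(k+1,x') \} $ with terminal condition $ V(\ft,x) = \ell_\ft(x) $, and Assumption~\ref{ass:invertible}-(iv) guarantees this is finite for at least one policy.

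The key observation is that both the KL term and the expected cost-to-go depend on $ \Pi_k $ only through the induced transition law $ P^{\Pi_k}(\cdot|x) $, and that finiteness of the KL term forces this law to be supported on $ \bar{\bbX}_x $. Hence I would first minimize over transition distributions $ P(\cdot|x) $ on $ \bar{\bbX}_x $, performing the same algebraic completion as in \eqref{eq:inf}: the objective rewrites as $ \kl{P(\cdot|x)}{P^*(\cdot|x)} - \log \sum_{x'' \in \bar{\bbX}_x} \bar{P}(x''|x) Z(k+1,x'') $, where $ P^*(x'|x) := \bar{P}(x'|x) Z(k+1,x') / \sum_{x''} \bar{P}(x''|x) Z(k+1,x'') $. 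Since KL divergence vanishes only when its arguments coincide, the infimum over transition laws is attained uniquely at $ P = P^* $, yielding the simplified recursion $ V(k,x) = \ell_k(x) - \log \sum_{x''} \bar{P}(x''|x) Z(k+1,x'') $, that is, the discrete analog of \eqref{eq:Z_eq}; the terminal condition is immediate, and the path integral representation \eqref{eq:path_integral} then follows by the same induction as in Corollary~\ref{cor:path_integral}.

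It remains to realize $ P^* $ by an admissible policy, and this is the main obstacle, since when $ f_x $ is not injective several inputs $ u^{(i)} $ share the same image $ x' $ and the optimal transition law only pins down their \emph{total} mass. I would exhibit one such policy by setting $ \Pi_k^*(w|x) \propto \bbP(w_k = w) Z(k+1, f(x,w)) $ for $ w \in \bbW \subseteq \bbU $ and zero otherwise; summing over the fibre $ \{i : f(x,u^{(i)}) = x'\} $ collapses the numerator to $ \bbP(f(x,w_k)=x') Z(k+1,x') $, so $ \Pi_k^* $ induces exactly $ P^* $ and satisfies \eqref{eq:opt_policy_discrete}, while the normalization sums to one because every fibre is exhausted. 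Any policy meeting \eqref{eq:opt_policy_discrete} induces the same optimal transition law $ P^* $, hence achieves the infimum at each stage and is optimal by the Bellman recursion. Finally, for the uniqueness claim, when $ f_x $ is bijective each fibre is a singleton, so $ P^{\Pi_k}(x'|x) = \Pi_k(f_x^{-1}(x')|x) $ furnishes a bijection between policies and transition laws on $ \bbX_x $; since $ P^* $ is the unique minimizing transition law, the policy satisfying \eqref{eq:opt_policy_discrete} is uniquely determined, completing the argument.
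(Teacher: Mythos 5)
Your proposal is correct and follows essentially the same route as the paper: the paper's proof simply identifies the induced transition probabilities $\bbP(x_{k+1}=x'|x_k=x)=\sum_{i:f(x,u^{(i)})=x'}\Pi_k(u^{(i)}|x)$ and $\bbP(\bar{x}_{k+1}=x'|\bar{x}_k=x)=\bbP(f(x,w_k)=x')$ and then invokes ``the same argument as in the proof of Theorem~\ref{thm:opt_ctrl_finite},'' with uniqueness following from the fibres being singletons when $f_x$ is bijective. Your write-up fills in exactly the details the paper leaves implicit — the KL-completion over transition laws supported on $\bar{\bbX}_x$, the explicit realization $\Pi_k^*(w|x)\propto\bbP(w_k=w)Z(k+1,f(x,w))$ using Assumption~\ref{ass:invertible}-(i), and the fibre-collapsing computation — so it is a faithful, correct expansion of the paper's intended argument.
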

\begin{proof}
  Note that
  \begin{align}
    &\bbP(x_{k+1} = x' | x_k = x) =\sum_{i : f(x,u^{(i)}) = x'} \Pi_k^* (u^{(i)} | x), \\
    &\bbP(\bar{x}_{k+1} = x' | \bar{x}_k = x) = \bbP \left(f(x,w_k) = x' \right).
  \end{align}
  Then, by the same argument as in the proof of Theorem~\ref{thm:opt_ctrl_finite}, we obtain the existence and optimality of $ \{\Pi_k^*\} $ satisfying~\eqref{eq:opt_policy_discrete}. Especially when for all $ x\in \bbX $, $ f_x $ is bijective, $ \{u\in \bbU : f(x,u) = x'\} $ is a singleton for all $ x\in \bbX,\ x' \in \bar{\bbX}_x $, which leads to the uniqueness of the optimal policy.
\end{proof}
The above result clarifies that in Assumption~\ref{ass:invertible}, the condition~(i) $ \bbW\subseteq \bbU $ plays a crucial role in making optimal control problems linearly solvable while the bijectivity of $ f_x $ ensures the uniqueness of the optimal policy. Note that Corollary~\ref{cor:discrete} does not assume $ m = n $.

\subsection{Reconsideration of controllability assumption of transition density}
Now, let us go back to the original formulation of KL control (Problem~\ref{prob:conventional_KL}) and reconsider the implication of Assumption~\ref{ass:conventional_KL}.
In the rest of this section, the control-affine system is considered:
\begin{equation}
  x_{k+1} = f_0(x_k) + g(x_k) (u_k + w_k), \ w_k \sim \pdf_{w_k}, 
\end{equation}
where $\{w_k\} $ is a sequence of independent random variables.
For the above system, causal controllers $ \pi_k (u_k | x_k) $ cannot satisfy \eqref{eq:conventional_optimality}, and therefore the associated Bellman equation cannot be linearized.
To gain deeper insight into Assumption~\ref{ass:conventional_KL} that ensures the existence of a policy satisfying \eqref{eq:conventional_optimality}, we shall introduce an atypical assumption.
\begin{assumption}\label{ass:noncausality}
  The control input $ u_k $ is allowed to depend on $ w_k $.
  \hfill $ \diamondsuit $
\end{assumption}
This assumption means that the causality of controllers can be violated.
Now the decision variables for Problem~\ref{prob:conventional_KL} are replaced by $ \pi_{\sfw,k} (\cdot | x, w) := \pdf_{u_k|x_k,w_k} (\cdot | x, w), \ k\in \bbra{\ft-1} $.
Then, we have the following result.
\begin{theorem}
  Suppose that Assumptions~\ref{ass:invertible}-(ii),(iii),~\ref{ass:noncausality} hold for $ f_x(u) = f_0 (x) + g(x) u $.
  Then, the unique optimal policy for Problem~\ref{prob:conventional_KL} is given by
  \begin{align}
    \pi_{\sfw,k}^* (u | x, w) := \frac{\pdf_{w_k} (u + w) z\left(k+1, f_0 (x) + g(x)u\right)}{\calA_{\pdf_{k+1}^0} [z](k,x)}, \ u\in \bbU, \ x\in \bbX.
  \end{align}
  In addition, the desirability function $ z $ satisfies \eqref{eq:conventional_Z_bellman},~\eqref{eq:conventional_Z_bellman_terminal}.
\end{theorem}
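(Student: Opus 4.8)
The plan is to replay the dynamic-programming argument behind Theorem~\ref{thm:opt_ctrl_finite}, now taking the non-causal policies $\pi_{\sfw,k}(\cdot\,|\,x,w)=\pdf_{u_k|x_k,w_k}(\cdot\,|\,x,w)$ as the decision variables. A useful preliminary observation is that, for the control-affine system, the uncontrolled transition $\pdf_{k+1}^0(\cdot\,|\,x)$ (put $u_k=0$, so $x_{k+1}=f_0(x)+g(x)w_k$) coincides with the noise-driven transition $\bar{\pdf}_{k+1}(\cdot\,|\,x)$ of \eqref{eq:noise_driven}; hence the conventional desirability $z$ agrees with $Z$ and \eqref{eq:conventional_Z_bellman} is exactly the recursion \eqref{eq:Z_eq}. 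I would then write the Bellman equation for the conventional value function and note that, because $u_k$ is allowed to see $w_k$, the controller effectively chooses the conditional law $q_k(\cdot\,|\,x,w)$ of the next state $x_{k+1}=f_0(x)+g(x)(u_k+w_k)$ given $(x_k,w_k)=(x,w)$. The crucial structural point is that, for each fixed $w$, this conditional density can be shaped into an arbitrary target, so the inner minimization of the Bellman equation becomes an unconstrained KL-regularized problem in $q_k(\cdot\,|\,x,w)$.

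Next I would expand the per-step objective exactly as in \eqref{eq:inf}: for each $w$,
\[
  \kl{q_k(\cdot\,|\,x,w)}{\pdf_{k+1}^0(\cdot\,|\,x)} + \calA_{q_k(\cdot\,|\,x,w)}[v](k,x)
  = \kl{q_k(\cdot\,|\,x,w)}{\pdf_{k+1}^*(\cdot\,|\,x)} - \log\calA_{\pdf_{k+1}^0}[z](k,x),
\]
where $z=\exp(-v)$ and $\pdf_{k+1}^*(x'\,|\,x):=\pdf_{k+1}^0(x'\,|\,x)\,z(k+1,x')/\calA_{\pdf_{k+1}^0}[z](k,x)$. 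The second term is policy-independent and the first is minimized, uniquely, by $q_k=\pdf_{k+1}^*$ for every $w$; in particular the optimal conditional density does not depend on $w$, so it coincides with the optimal marginal transition density and the argument applies verbatim to the cost in \eqref{eq:conventional_KLcontrol}. This collapses the Bellman equation to $v(k,x)=\ell_k(x)-\log\calA_{\pdf_{k+1}^0}[z](k,x)$, which is the linear recursion \eqref{eq:conventional_Z_bellman}--\eqref{eq:conventional_Z_bellman_terminal}.

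It remains to translate $q_k=\pdf_{k+1}^*$ into the policy $\pi_{\sfw,k}$. Under Assumptions~\ref{ass:invertible}-(ii),(iii) the matrix $g(x)$ is invertible, so for fixed $(x,w)$ the map $u\mapsto x'=f_0(x)+g(x)(u+w)$ is a diffeomorphism with constant Jacobian; applying the change-of-variables formula~\cite[Chapter~6, Theorem~5]{Roussas2015} and matching the induced conditional density to $\pdf_{k+1}^*$ recovers the stated optimal policy $\pi_{\sfw,k}^*$, and bijectivity of this map yields uniqueness just as in Theorem~\ref{thm:opt_ctrl_finite}. The conceptual heart of the proof — and the point the section is making — is the step that turns $q_k(\cdot\,|\,x,w)$ into a genuinely free decision variable: I expect the main work to lie in justifying that observing $w_k$ restores the full one-step controllability of the transition density that Assumption~\ref{ass:conventional_KL} postulates but that causal controllers provably cannot deliver, thereby making explicit that Assumption~\ref{ass:conventional_KL} amounts to granting the controller knowledge of the injected noise.
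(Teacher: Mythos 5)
Your overall route is essentially the paper's: the paper likewise reduces the theorem to checking that a noncausal policy can realize the unconstrained minimizer
\begin{equation*}
\pdf_{k+1}^{*}(x'|x) := \frac{\pdf_{k+1}^{0}(x'|x)\, z(k+1,x')}{\calA_{\pdf_{k+1}^{0}}[z](k,x)}
\end{equation*}
of the KL-decomposed Bellman right-hand side, i.e., that \eqref{eq:conventional_optimality} can be met once $u_k$ may depend on $w_k$, after which optimality and the linear recursion \eqref{eq:conventional_Z_bellman}, \eqref{eq:conventional_Z_bellman_terminal} follow exactly as in Section~\ref{sec:review}. The paper does this by computing $\pdf_{x_{k+1}|x_k,w_k}$ via the change of variables and marginalizing over $w$; your re-derivation of the dynamic program with $q_k(\cdot|x,w)$ as the decision variable is the same idea in different packaging.

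The genuine problem is in the step you defer: ``matching the induced conditional density to $\pdf^{*}_{k+1}$ recovers the stated optimal policy.'' Carried out, it does not. Conditional on $(x_k,w_k)=(x,w)$ the next state is $x'=f_0(x)+g(x)(u+w)$, so matching its law to $\pdf^{*}_{k+1}(\cdot|x)$ forces
\begin{equation*}
\pi_{\sfw,k}(u|x,w)=\frac{\pdf_{w_k}(u+w)\, z\bigl(k+1,\, f_0(x)+g(x)(u+w)\bigr)}{\calA_{\pdf_{k+1}^{0}}[z](k,x)} ,
\end{equation*}
with $z$ evaluated at the realized next state $f_0(x)+g(x)(u+w)$, not at $f_0(x)+g(x)u$ as in the statement. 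The stated $\pi^{*}_{\sfw,k}$ is not even a probability density for $w\neq 0$ (substituting $s=u+w$, its total mass is $\int \pdf_{w_k}(s)\, z(k+1,f_0(x)+g(x)(s-w))\,\rmd s \,/\, \calA_{\pdf_{k+1}^{0}}[z](k,x)$, which differs from $1$ in general), and the transition it induces, proportional to $\pdf_{w_k}\bigl(g(x)^{-1}(x'-f_0(x))\bigr)\, z(k+1, x'-g(x)w)$, does not average over $w$ to $\pdf^{*}_{k+1}(\cdot|x)$. So the statement's formula appears to carry a typo, and a complete proof must either derive the corrected formula or reconcile it with the statement; by asserting the recovery without computing it, your proposal glosses over exactly the point where the argument (and, for what it's worth, the paper's own ``straightforward to check'') breaks down. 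A secondary soft spot is uniqueness: the cost \eqref{eq:conventional_KLcontrol} depends on the policy only through the $w$-marginal transition density, and marginalization over $w$ is not injective, so bijectivity of $u\mapsto x'$ alone does not exclude optimal policies whose conditionals vary in $w$ but average to $\pdf^{*}_{k+1}$; your per-$w$ argument would yield uniqueness only for a $w$-averaged KL cost, which by convexity of the KL divergence dominates, and generally differs from, the KL divergence of the marginal appearing in \eqref{eq:conventional_KLcontrol}.
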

\begin{proof}
  Note that
  \begin{equation}
    \rho_{k+1}^0 (x' | x) = \frac{1}{|{\rm det} \left( g(x)\right)|} \rho_{w_k} \left( \left(g(x)\right)^{-1} (x' - f(x))  \right)
  \end{equation}
  and under a policy $ \pi_{{\sf w},k} $,
  \begin{align}
     \pdf_{x_{k+1}|x_k,w_k} (x' | x,w) = \frac{1}{|{\rm det} \left( g(x)\right)|} \pi_{{\sf w},k} \left( \left(g(x) \right)^{-1} (x' - f(x)) - w \bigl| \ x,w \right).
  \end{align}
  Also, we have
  \begin{align}
    \pdf_{k+1}^{\pi_{{\sf w},k}} (x' | x) &= \int_{\bbW} \pdf_{x_{k+1},w_k|x_k} (x', w | x) \rmd w \nonumber\\
    &= \int_{\bbW} \pdf_{x_{k+1}|x_k,w_k} (x' | x,w) \pdf_{w_k|x_k} (w | x) \rmd w \nonumber\\
    &= \int_{\bbW} \pdf_{x_{k+1}|x_k,w_k} (x' | x,w) \pdf_{w_k} (w) \rmd w .
  \end{align}
  Then, it is straightforward to check that \eqref{eq:conventional_optimality} is satisfied for $ \pi_{\sfw,k} = \pi_{\sfw,k}^* $.
\end{proof}
This theorem shows that Assumption~\ref{ass:noncausality} for the noncausality of policies plays the same role as Assumption~\ref{ass:conventional_KL}.
Of course, the noncausality is unrealistic for practical applications.
This clarifies that the reformulated KL control is much more realistic for systems on continuous spaces than the original formulation of KL control.

\section{Linear quadratic Gaussian setting}\label{sec:LQG}
In this section, we focus on a linear system ($ f(x,u) = Ax+Bu $) with $ \bbU = \bbR^m $, a quadratic cost
\begin{equation}\label{eq:quadratic_cost}
 \ell_k (x) = \frac{1}{2} x^\top Q_k x, \ Q_k \succ 0 ,\ k = 0,\ldots,\ft ,
\end{equation}
and Gaussian noise $ w_k \sim \calN (0, \Sigma_k), \ \Sigma_k \succ 0 $.
Assume that $ m = n $ and $ B $ is invertible. Then Assumption~\ref{ass:invertible} is satisfied.
Now, we calculate the optimal policy for Problem~\ref{prob:KLcontrol} analytically.
First, for $ k = \ft -  1 $, we have
\begin{align}
  &\pi_{\ft-1}^* (u|x) \propto \calN(u | 0, \Sigma_{\ft-1}) Z (\ft, Ax + Bu)  \nonumber\\
  &\propto \exp \left( - \frac{1}{2} \left( u^\top \Sigma_{\ft-1}^{-1} u + (Ax + Bu)^\top Q_\ft (Ax + Bu) \right)   \right) \nonumber \\
  &\propto \exp \biggl( - \frac{1}{2} \left[ u + (\Sigma_{\ft-1}^{-1} + B^\top Q_\ft B)^{-1} B^\top Q_\ft Ax  \right]^\top \nonumber \\
  &\qquad\quad \times (\Sigma_{\ft-1}^{-1} + B^\top Q_\ft B)  \nonumber \\
  &\qquad\quad \times  \left[ u + (\Sigma_{\ft-1}^{-1} + B^\top Q_\ft B)^{-1} B^\top Q_\ft Ax  \right] \biggr),
\end{align}
which means that
\begin{align}
  \pi_{\ft-1}^* (u | x) &= \calN \bigl(u \bigl| -(\Sigma_{\ft-1}^{-1} + B^\top Q_\ft B)^{-1} B^\top Q_\ft Ax, (\Sigma_{\ft-1}^{-1} + B^\top Q_\ft B)^{-1} \bigr) .
\end{align}
On the other hand,
\begin{align}
  &Z(\ft - 1,x) = \exp \left( -\frac{1}{2} x^\top Q_{\ft -1} x  \right)  \int_{\bbR^n} \calN (x' | Ax, B\Sigma_{\ft-1} B^\top) Z(\ft, x') \rmd x' \nonumber \\
  &= [{\rm det} (I + Q_\ft B \Sigma_{\ft-1} B^\top )]^{-1/2} \nonumber \\
  & \times \exp\Biggl( -\frac{1}{2} x^\top \bigl( Q_{\ft-1} +  A^\top (I - (I + Q_\ft B\Sigma_{\ft-1} B^\top)^{-1}  )  (B\Sigma_{\ft-1} B^\top)^{-1} A   \bigr) x \Biggr) , \label{eq:Z_ft-1}
\end{align}
where we used the formula
\begin{align*}
  &\bbE\left[ \exp \left( -\frac{1}{2} x^\top Q x \right)  \right] \\
  &= [{\rm det} (I + Q\Sigma)]^{-1/2} \exp \left( -\frac{1}{2} \mu^\top (I - (I + Q\Sigma)^{-1}) \Sigma^{-1} \mu \right)
\end{align*}
for $ Q \succ 0 $ and $ x \sim \calN (\mu, \Sigma),  \Sigma \succ 0 $.
Note that
\begin{align*}
  &(I - (I + Q_\ft B\Sigma_{\ft-1} B^\top)^{-1}) (B\Sigma_{\ft-1} B^\top)^{-1} \\
  &= (I - \Sigma_B^{-1} (I + Q_\ft^{-1}\Sigma_B^{-1})^{-1} Q_\ft^{-1} ) \Sigma_B^{-1} \\
  &= (Q_\ft^{-1} + B\Sigma_{\ft-1} B^\top)^{-1} \\
  &= Q_\ft - Q_\ft B (\Sigma_{\ft-1}^{-1} + B^\top Q_\ft B)^{-1} B^\top Q_\ft ,
\end{align*}
where $ \Sigma_B := B\Sigma_{\ft-1} B^\top $.
Substituting this into \eqref{eq:Z_ft-1}, we obtain
\begin{align}
  &Z(\ft-1, x) = [ {\rm det} (I + Q_\ft B \Sigma_{\ft-1} B^\top )]^{-1/2} \exp \left( -\frac{1}{2} x^\top P_{\ft -1} x  \right), \\
  &P_{\ft -1} := Q_{\ft-1}  + A^\top P_\ft A  - A^\top P_\ft B (\Sigma_{\ft-1}^{-1} + B^\top P_\ft B)^{-1} B^\top P_\ft A , \ P_\ft := Q_\ft .
\end{align}
By applying the same argument as above for $ k = \ft-2,\ldots,0 $, we obtain the following result.
\begin{theorem}
  Assume that $ m = n $ and $ B $ is invertible.
  Then, the unique optimal policy $ \pi_k^* $ for Problem~\ref{prob:KLcontrol} with $ f(x,u) = Ax + Bu, \ \bbU = \bbR^n, \ w_k \sim \calN(0,\Sigma_k), \ \Sigma_k \succ 0 $, and \eqref{eq:quadratic_cost} is given by
  \begin{align}
    \pi_{k}^* ( u | x) &= \calN \bigl(u \bigl| -(\Sigma_{k}^{-1} + B^\top P_{k+1} B)^{-1} B^\top P_{k+1} Ax,  (\Sigma_{k}^{-1} + B^\top P_{k+1} B)^{-1} \bigr), \ k \in \bbra{\ft-1} \label{eq:lqg_policy}
  \end{align}
  where $ P_k $ is a solution of the Riccati difference equation
  \begin{align}
    &P_{k} = Q_{k}  + A^\top P_{k+1} A  - A^\top P_{k+1} B (\Sigma_{k}^{-1} + B^\top P_{k+1} B)^{-1} B^\top P_{k+1} A , \ k\in \bbra{\ft-1}, \label{eq:riccati} \\
    &P_\ft = Q_\ft \label{eq:riccati_terminal} .
  \end{align}
  The desirability function is given by
  \begin{align}\label{eq:Z_forward}
    Z(k,x) &= \left(\prod_{s=k+1}^\ft [{\rm det} (I + P_s B\Sigma_{s-1} B^\top)]^{-1/2}  \right)  \exp \left( -\frac{1}{2} x^\top P_k x  \right) .
  \end{align}
  \hfill $ \diamondsuit $
\end{theorem}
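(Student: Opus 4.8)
The plan is to establish the three assertions — the Riccati recursion \eqref{eq:riccati}--\eqref{eq:riccati_terminal}, the closed form \eqref{eq:Z_forward} for the desirability function, and the Gaussian policy \eqref{eq:lqg_policy} — simultaneously by backward induction on $ k $, invoking the recursion \eqref{eq:Z_eq}--\eqref{eq:Z_terminal} and the optimal-policy formula \eqref{eq:opt_policy} from Theorem~\ref{thm:opt_ctrl_finite}. The induction hypothesis at step $ k+1 $ is that $ Z(k+1,\cdot) $ has the form \eqref{eq:Z_forward} with some symmetric $ P_{k+1}\succ 0 $ solving \eqref{eq:riccati}--\eqref{eq:riccati_terminal}. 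The base case $ k=\ft $ is immediate: the product in \eqref{eq:Z_forward} is empty, so $ Z(\ft,x)=\exp(-\tfrac12 x^\top Q_\ft x)=\exp(-\ell_\ft(x)) $, matching \eqref{eq:Z_terminal} with $ P_\ft=Q_\ft\succ 0 $.

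For the inductive step I would evaluate $ \calA_{\bar\pdf_{k+1}}[Z](k,x) $. Since the noise-driven dynamics \eqref{eq:noise_driven} here reads $ \bar x_{k+1}=A\bar x_k+Bw_k $, we have $ \bar\pdf_{k+1}(\cdot|x)=\calN(\cdot\,|\,Ax,B\Sigma_k B^\top) $, and $ B\Sigma_k B^\top\succ 0 $ because $ B $ is invertible. Substituting the induction hypothesis turns $ \calA_{\bar\pdf_{k+1}}[Z](k,x) $ into a Gaussian integral of $ \exp(-\tfrac12 x'^\top P_{k+1}x') $ against $ \calN(x'|Ax,B\Sigma_k B^\top) $, which the Gaussian expectation formula invoked for \eqref{eq:Z_ft-1} evaluates in closed form (applicable since $ P_{k+1}\succ 0 $). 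This is exactly the computation carried out for $ k=\ft-1 $ in \eqref{eq:Z_ft-1}, now with $ Q_\ft $ replaced by $ P_{k+1} $. Multiplying by $ \exp(-\ell_k(x)) $ as in \eqref{eq:Z_eq} then yields $ Z(k,x) $ in the form \eqref{eq:Z_forward}, with the new prefactor $ [\det(I+P_{k+1}B\Sigma_k B^\top)]^{-1/2} $ absorbed into the product and the new quadratic form $ P_k $ defined by \eqref{eq:riccati}.

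The crux is the matrix identity
\[
  \bigl(I-(I+P_{k+1}B\Sigma_k B^\top)^{-1}\bigr)(B\Sigma_k B^\top)^{-1}
  = P_{k+1}-P_{k+1}B(\Sigma_k^{-1}+B^\top P_{k+1}B)^{-1}B^\top P_{k+1},
\]
which converts the exponent produced by the Gaussian integral into the Riccati increment appearing in \eqref{eq:riccati}; this is the displayed chain of equalities following \eqref{eq:Z_ft-1} with $ \Sigma_B:=B\Sigma_k B^\top $, and it rests on the Woodbury identity. To keep the induction running I must also verify $ P_k\succ 0 $: the right-hand side above equals $ (P_{k+1}^{-1}+B\Sigma_k B^\top)^{-1}\succ 0 $ by the matrix inversion lemma, so $ A^\top(\cdots)A\succeq 0 $ and hence $ P_k=Q_k+A^\top(\cdots)A\succ 0 $ since $ Q_k\succ 0 $; this restores the hypothesis needed to apply the Gaussian formula at the next step, and the factor $ \Sigma_k^{-1}+B^\top P_{k+1}B\succ 0 $ guarantees \eqref{eq:riccati} and \eqref{eq:lqg_policy} are well defined.

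Finally, the policy \eqref{eq:lqg_policy} follows by plugging the induction hypothesis into \eqref{eq:opt_policy}: $ \pi_k^*(u|x)\propto \calN(u|0,\Sigma_k)\exp(-\tfrac12(Ax+Bu)^\top P_{k+1}(Ax+Bu)) $, and completing the square in $ u $ — precisely the manipulation shown for $ k=\ft-1 $ — identifies the mean and covariance in \eqref{eq:lqg_policy}. Uniqueness is inherited from Theorem~\ref{thm:opt_ctrl_finite}, whose hypotheses (Assumption~\ref{ass:invertible}) hold here because $ m=n $ and $ B $ is invertible. The main obstacle is the matrix algebra: correctly applying Woodbury both to collapse the Gaussian exponent into \eqref{eq:riccati} and to certify $ P_k\succ 0 $, while tracking the determinant prefactors so that they telescope into the product in \eqref{eq:Z_forward}.
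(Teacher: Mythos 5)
Your proposal is correct and takes essentially the same route as the paper: the paper carries out exactly this backward recursion explicitly for $k=\ft-1$ (completing the square for the policy, the Gaussian expectation formula, and the Woodbury-type identity with $\Sigma_B = B\Sigma_{\ft-1}B^\top$) and then extends it to $k=\ft-2,\ldots,0$ "by the same argument," which is precisely the induction you formalize. Your only addition is the explicit check that $P_k \succ 0$ propagates through the recursion so the Gaussian formula remains applicable, a detail the paper leaves implicit.
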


The mean of the optimal policy \eqref{eq:lqg_policy} coincides with the LQR controller~\cite{Lewis2012}. In other words, the optimal policy is the LQR feedback controller perturbed by additive Gaussian noise with zero mean and covariance matrix $ (\Sigma_k^{-1} + B^\top P_{k+1} B)^{-1} $.

In the above, we have analyzed the desirability function based on the backward equation \eqref{eq:Z_eq}. Hence, the obtained representation \eqref{eq:Z_forward} contains the solution of the backward Riccati difference equation.
For comparison, we calculate the desirability function based on the forward representation \eqref{eq:path_integral}.
Let $ \bar{x}_{k+1:\ft} := [\bar{x}_{k+1}^\top \ \cdots \ \bar{x}_\ft^\top]^\top $ and
\begin{align}
  &\bar{A}_k := \left[ A^\top \ (A^2)^\top \ \cdots \ (A^{k})^\top \right]^\top , \\
  &\Sigma_{k+1:\ft} := \diag (\Sigma_{k+1},\ldots,\Sigma_{\ft}), \\
  &Q_{k+1:\ft} := \diag (Q_{k+1},\ldots,Q_{\ft}) ,
  \\
  &L_k := \left[
  \begin{array}{ccccc}
  B & 0 & \cdots & \cdots & 0 \\
  AB & B & \ddots &  & \vdots \\
  A^2B & AB & B & \ddots & \vdots \\
  \vdots & \vdots & \ddots & \ddots & 0 \\
  A^{k-1} B & A^{k-2}B & \cdots & AB & B
  \end{array}
  \right] .
  \end{align}
Then, the conditional distribution of $ \bar{x}_{k+1:\ft} $ given $ \bar{x}_k = x $ is $ \calN (\bar{A}_{\ft-k} x, L_{\ft-k} \Sigma_{k+1:\ft} L_{\ft-k}^\top  )$.
By Corollary \ref{cor:path_integral},
\begin{align}
  &Z(k,x) = \exp \left( -\frac{1}{2} \|x\|_{Q_k}^2  \right)  \bbE \left[  \exp \left( - \frac{1}{2} \| \bar{x}_{k+1:\ft} \|_{Q_{k+1:\ft}}^2  \right)  \biggl| \bar{x}_k = x  \right]  \nonumber \\
  &= [{\rm det} (I + Q_{k+1:\ft} L_{\ft-k} \Sigma_{k+1:\ft} L_{\ft-k}^\top) ]^{-1/2} \nonumber \\
  &\quad \times\exp \biggl( -\frac{1}{2} x^\top \bigl( Q_k + \bar{A}_{\ft-k}^\top (Q_{k+1:\ft}^{-1} + L_{\ft-k} \Sigma_{k+1:\ft} L_{\ft-k}^\top)^{-1} \bar{A}_{\ft-k}  \bigr) x  \biggr) , \label{eq:Z_for}
\end{align}
where $ \|x\|_{Q} := (x^\top Q x)^{1/2} $ for $ Q \succ 0 $.
The fact that the desirability function can be expressed in two different ways \eqref{eq:Z_forward},\eqref{eq:Z_for} is similar to the fact that the value function for LQR control
\begin{align}
  V_{\rm LQR}(k,x) := &\inf_{\{u_s\}}  \ \frac{1}{2} \|x_\ft\|_{Q_\ft}^2 + \sum_{s=k}^{\ft-1} \frac{1}{2} \left(\|x_s\|_{Q_s}^2 + \|u_s\|_{\Sigma_s^{-1}}^2 \right) \nonumber \\
  &{\rm subj.~to} \ x_{s+1} = Ax_s + Bu_s, \ s\in [k,\ft-1] ,\ x_k = x \nonumber
\end{align}
can be written in the following two ways:
\begin{align}
  V_{\rm LQR}(k,x)=
  \begin{cases}
    \frac{1}{2} x^\top P_k x, \\
    \frac{1}{2} x^\top \bigl(Q_k + \bar{A}_{\ft-k}^\top (Q_{k+1:\ft}^{-1} + L_{\ft-k} \Sigma_{k+1:\ft} L_{\ft-k}^\top)^{-1} \bar{A}_{\ft-k} \bigr) x .
  \end{cases} 
\end{align}

\section{Numerical examples}\label{sec:example}
In this section, we illustrate the reformulated KL control through two examples.
\subsection{Linear quadratic case}
Consider the linear quadratic case where
\begin{equation}
  A = 0.85, \ B  = 0.10, \ Q_k = 3.0, \ \Sigma_k = 1.5, \forall k
\end{equation}
and a finite horizon $ \ft  = 30 $.
First, for comparison we compute the associated value function in two ways: by using the explicit expression \eqref{eq:Z_forward} and by using a Monte Carlo method based on the path integral representation~\eqref{eq:path_integral}.
For the Monte Carlo method, we generate $ S $ sample paths $ \{\bar{x}_k^{(i)}\}_{k=0}^\ft, i=1,\ldots,S $ with $ \bar{x}_0 = x, w_k \sim \calN(0, \Sigma_k) $ and compute
\[
  -\log \left[\frac{1}{S} \sum_{i=1}^S \exp\left(-\sum_{s=0}^\ft \ell_s(\bar{x}_s^{(i)})\right) \right]
\]
to approximate $ V(0,x) $.
As shown in Fig.~\ref{fig:value_func}, $ V(0,x) $ is well approximated by the Monte Carlo estimate with $ 3000 $ samples. The computation time for each $ x $ is about $ 0.025 $ s, $ 0.24 $ s, and $ 0.71 $ s for $ S = 100,1000,3000 $, respectively, with MATLAB on MacBook~Pro with Apple~M1~Pro.
Note that the Monte Carlo simulations can be easily parallelized.
Next, three samples of the optimal state and control processes $ \{x_k\}, \{u_k\} $ for different $ (Q_k,\Sigma_k) $ are shown in Fig.~\ref{fig:trajectory}.
As can be seen, as $ \Sigma_k $ increases, the absolute value of mean and variance of the optimal control gets larger.
This is because for larger $ \Sigma_k $, the cost of shifting the transition distribution $ \pdf_{k+1}^{\pi_k}(\cdot | x_{k}) $ from the reference distribution $ \calN(\cdot | Ax_k, B\Sigma_k B^\top) $ becomes smaller, while the cost of reducing the variance of the transition distribution becomes larger.
In Fig.~\ref{fig:lq_q3sig10} and Fig.~\ref{fig:lq_q30sig1}, the values of $ Q_k/\Sigma_k^{-1} $ coincide. Therefore the mean values of the optimal policies~\eqref{eq:lqg_policy} for the two cases also coincide although the control process in Fig.~\ref{fig:lq_q30sig1} has smaller variance than in Fig.~\ref{fig:lq_q3sig10}.
On the other hand, for the LQR problem whose cost is given by
\[
  \frac{1}{2} Q x_\ft^2 + \sum_{k=0}^{\ft-1} \frac{1}{2}(Q x_k^2 + \Sigma^{-1} u_k^2 ),
\]
the optimal control depends on $ Q, \Sigma $ only via $ Q/\Sigma^{-1} $.
This is in clear contrast to KL control.

\begin{figure}[tb]
	\centering
	\includegraphics[scale=0.4]{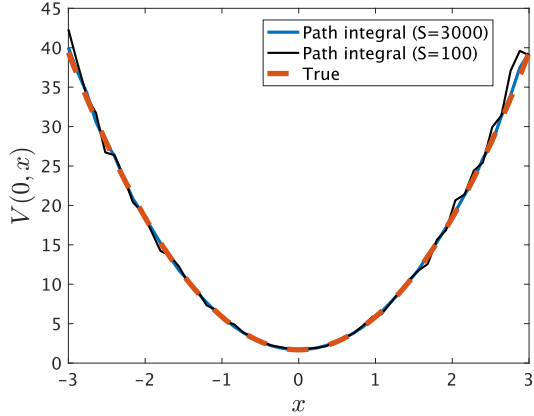}
	\caption{Monte Carlo estimates of the value function $ V(0,x) $ (red, dashed) with $ S  =100 $ (black) and $ S = 3000 $ (blue).}
	\label{fig:value_func}
\end{figure}

\begin{figure}[t]
  \centering
  \subfloat[$ Q_k = 3.0, \ \Sigma_k = 0.5 $]{\includegraphics[keepaspectratio, scale=0.23]
  {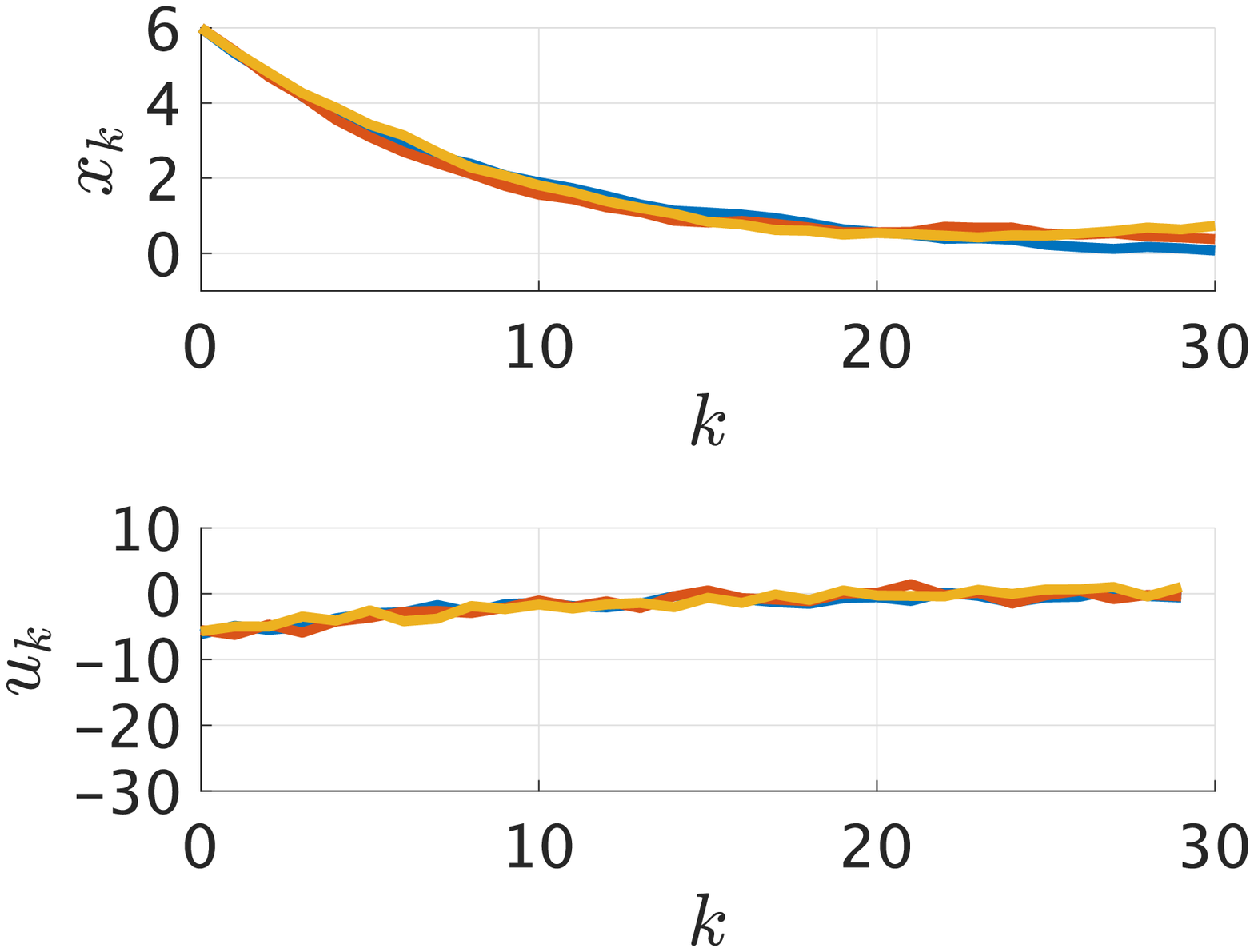}} \quad
  \subfloat[$ Q_k = 3.0, \ \Sigma_k = 1.5 $]{\includegraphics[keepaspectratio, scale=0.23]
  {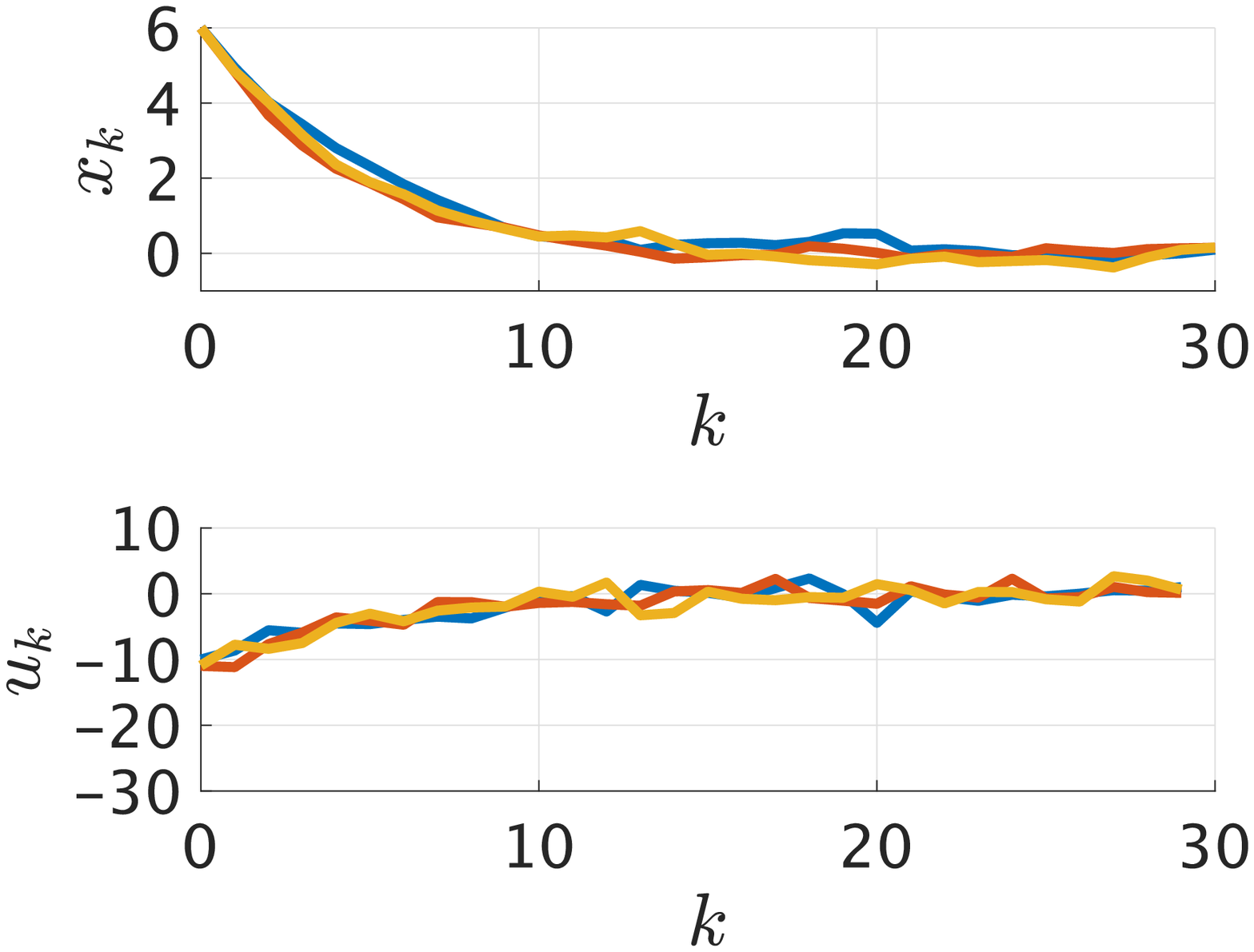}}
  \\
  \centering
  \subfloat[$ Q_k = 3.0, \ \Sigma_k = 10.0 $]{\includegraphics[keepaspectratio, scale=0.23]
  {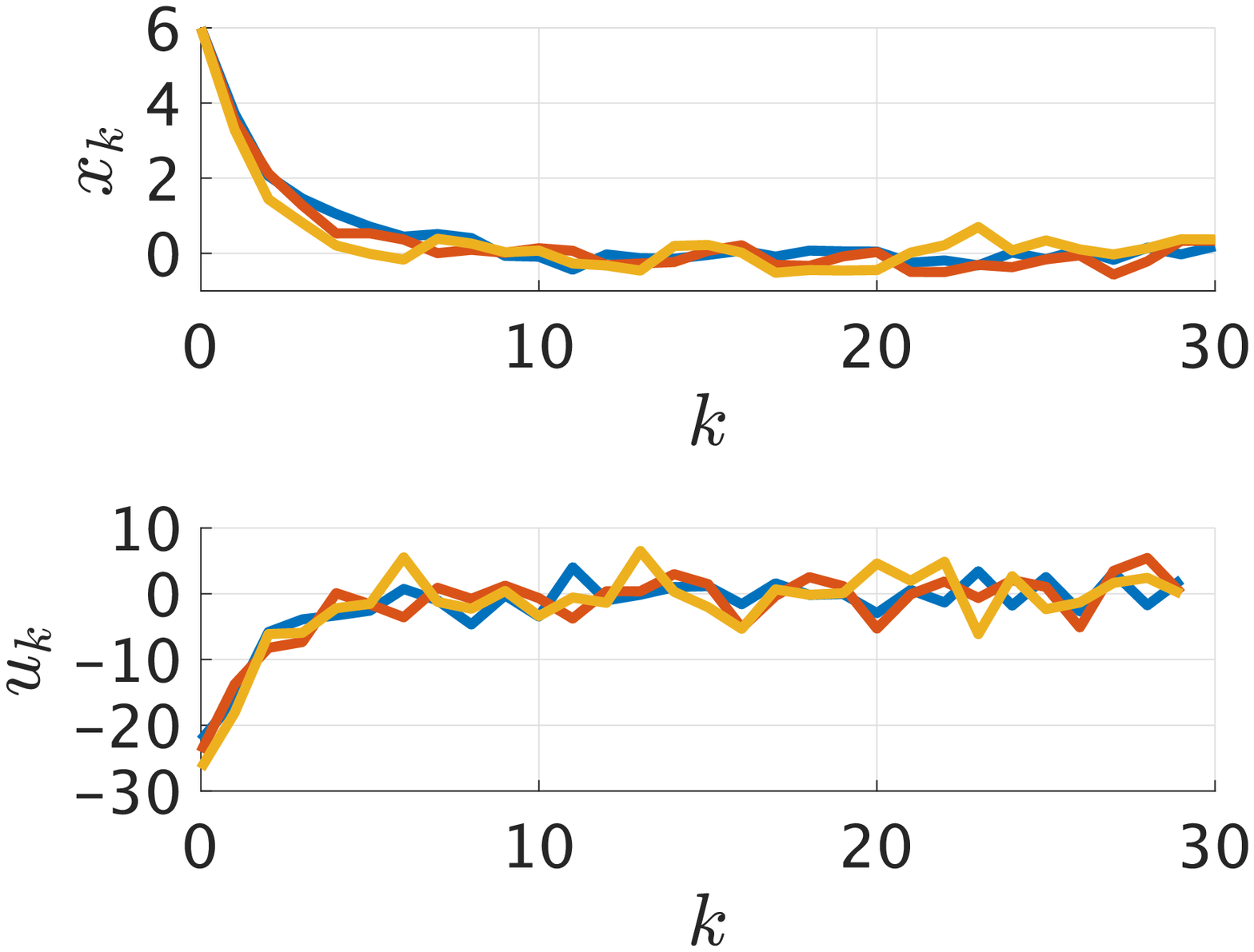}\label{fig:lq_q3sig10}} \quad
  \subfloat[$ Q_k = 30, \ \Sigma_k = 1.0 $]{\includegraphics[keepaspectratio, scale=0.23]
  {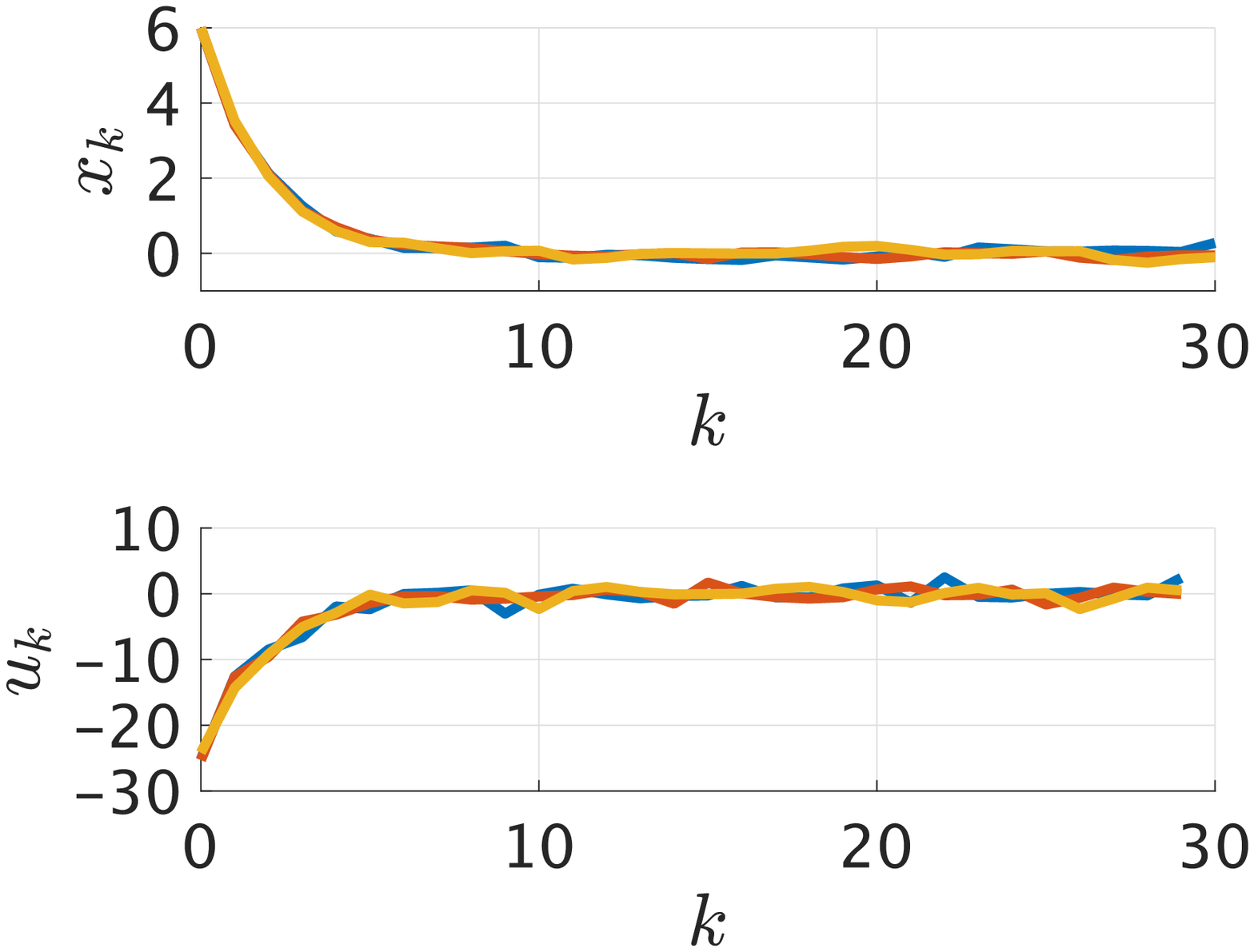}\label{fig:lq_q30sig1}}
  \caption{Three samples of the optimal state and control processes $ \{x_k\},\{u_k\} $ for different $ (Q_k, \Sigma_k) $.}\label{fig:trajectory}
\end{figure}

\subsection{Cart-pole pendulum}
We now proceed to apply our result to a nonlinear system. Specifically, we consider the cart-pole inverted pendulum in Fig.~\ref{fig:cart_pole}.
The system consists of a cart of mass $ M = 1.0~{\rm kg}$ moving horizontally, a massless rod of length $ L = 1.0~{\rm m} $ attached to the cart and rotating around a pivot point in the $ \bar{x}y $-plane only, and a point mass $ m=0.1~{\rm kg} $ at the end of the rod.
The input $ u $ is the horizontal force applied to the cart to maintain the pendulum in a balanced and upright position.
Here, we neglect the influence of friction.
Let $ \bar{x},  \theta $ be the position of the cart and the angle of the rod ($ \theta = 0 $ for the upright position and $ \theta = \pi $ for the downward position of the pendulum), respectively.

We then have the following continuous-time model of the cart-pole system:
\begin{align}
  &\ddot{\bar{x}} = \frac{-mL (\dot{\theta})^2 \sin \theta + mg \sin\theta \cos \theta + u}{M + m\sin^2 \theta} =: h_1 (\theta, \dot{\theta}, u) , \\
  &\ddot{\theta} = \frac{1}{L} ( h_1(\theta, \dot{\theta}, u) \cos \theta + g \sin \theta ) =: h_2 (\theta,\dot{\theta}, u),
\end{align}
where $ g = 9.8 ~{\rm m/s^2}$ is the gravitational acceleration.
By the Euler method, we obtain the discrete-time system:
\begin{align}
  x_{k+1} = f(x_k,u_k) =
  \begin{bmatrix}
    \bar{x}_k + \tau \dot{\bar{x}}_k \\
    \dot{\bar{x}}_k + \tau h_1 (\theta_k, \dot{\theta}_k, u_k) \\
    \theta_k + \tau \dot{\theta}_k \\
    \dot{\theta}_k + \tau h_2 (\theta_k, \dot{\theta}_k, u_k ) 
  \end{bmatrix} ,
\end{align}
where $ x_k = [\bar{x}_k \ \dot{\bar{x}}_k \ \theta_k \ \dot{\theta}_k]^\top $ and $ \tau = 0.05~{\rm s} $.
Here, we consider the discrete input space $ \bbU = \{ 2i~{\rm N} \}_{i=-10}^{10}$. For a cost function, let
\[
  \ell_k (x_k) =  q_1 |\bar{x}_k| + q_2 |\dot{\bar{x}}_k| + q_3 |\theta_k| + q_4 |\dot{\theta}_k|
\]
with $ q_1 = 7.0~{\rm m^{-1}}, \ q_2 = 2.5~{\rm s/m}, \ q_3 = 7.0~{\rm rad}^{-1}, \ q_4 = 2.5~{\rm s/rad} $. In addition, the noise $ w_k $ is designed to follow a discretized Gaussian distribution
\begin{equation}
  \bbP (w_k = {\rm w}) \propto \exp \left( - \frac{1}{2\sigma^2} {\rm w}^2  \right), \ {\rm w} \in \bbW = \bbU 
\end{equation}
with $ \sigma = 5.0~{\rm N} $. The initial state is given by $ \bar{x}_0 = 2.0~{\rm m}, \ \dot{\bar{x}}_0 = 0~{\rm m/s}, \ \theta_0 = 0.5~{\rm rad}, \ \dot{\theta}_0 = 0~{\rm rad/s} $.

Suppose that the state value at the current time $ k $ is $ x_k = x $.
Then by Corollary~\ref{cor:discrete}, the optimal policy at time $ k $ is given by
\begin{equation}
  \Pi_k^* (u | x) \propto \bbP\left(w_k = u \right) Z\left(k+1, f(x,u) \right), \ u \in \bbU,
\end{equation}
where the desirability function $ Z(k+1, f(x,u)) $ for each $ u \in \bbU $ can be computed by the Monte Carlo method based on \eqref{eq:path_integral}.
In this example, we use $ 5000 $ samples for the sampling approximation of $ Z $.

Figure~\ref{fig:position_angle} shows $ 50 $ sample paths of the optimal state process in the $ \bar{x} \theta $-plane.
The optimal policy balances the pendulum around the upright position while the cart-pole system fluctuates around the origin due to the stochasticity of the policy.
The detailed behavior of the optimal state and control processes is illustrated in Fig.~\ref{fig:time_series}.
One can see that the cart and pole velocity shows large fluctuations while as $ k $ increases, their mean values approach zero.
If one takes larger values of $ q_2,q_4 $, their fluctuations are reduced.

\begin{figure}[tb]
	\centering
	\includegraphics[scale=0.1]{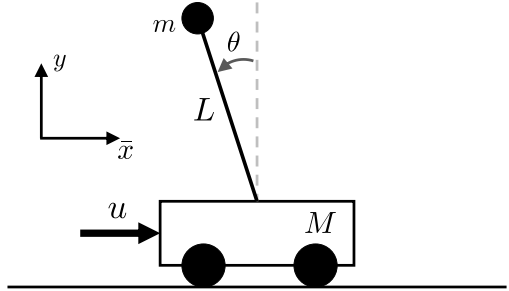}
	\caption{Cart-pole pendulum.}
	\label{fig:cart_pole}
\end{figure}

\begin{figure}[tb]
	\centering
	\includegraphics[scale=0.35]{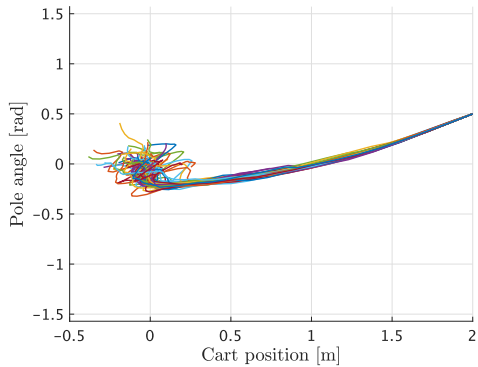}
	\caption{$ 50 $ sample paths of the optimal state process $ \{(\bar{x}_k, \theta_k)\} $.}
	\label{fig:position_angle}
\end{figure}

\begin{figure}[tb]
	\centering
	\includegraphics[scale=0.33]{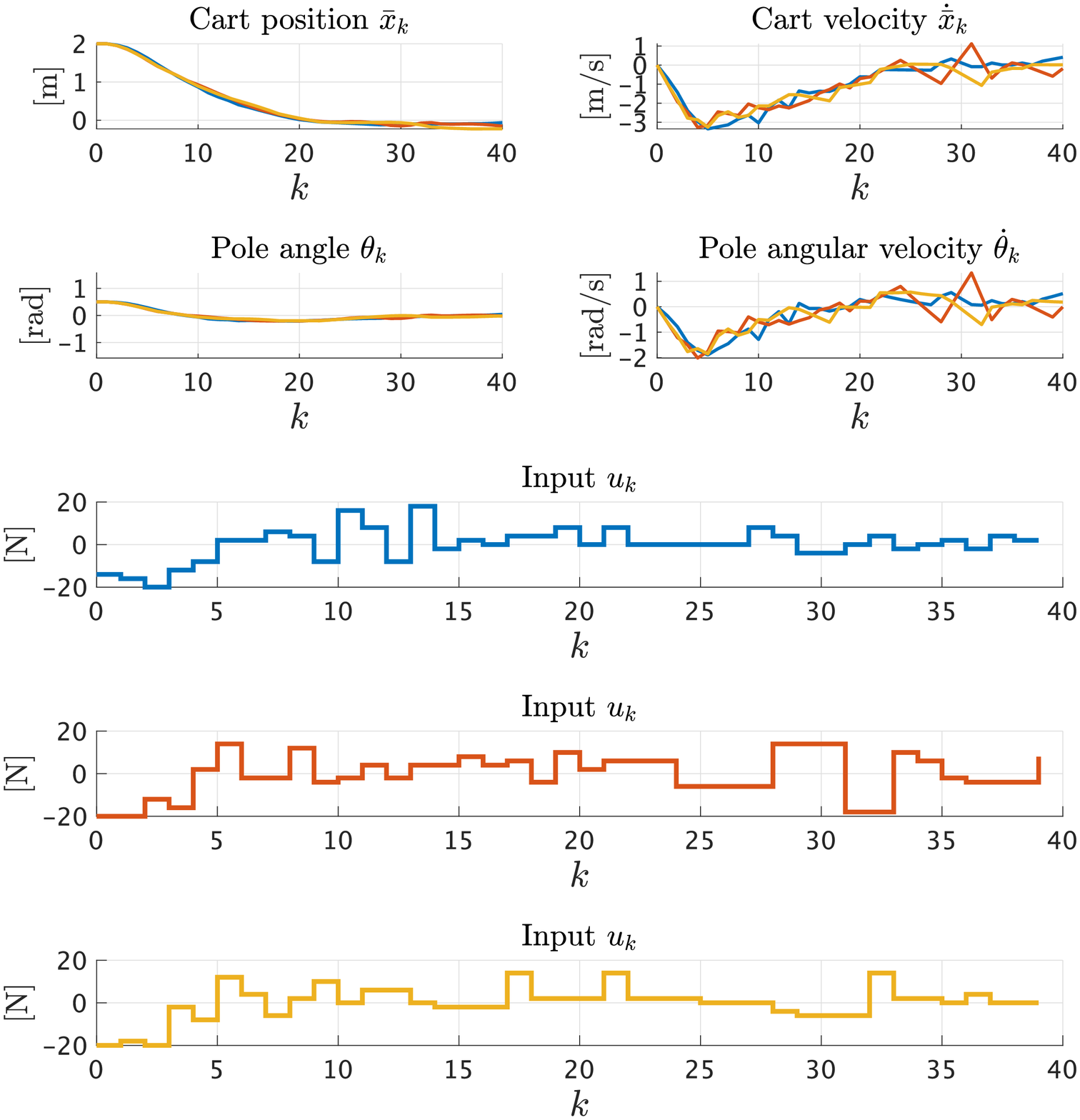}
	\caption{Three sample paths of the optimal state and control processes for the cart-pole pendulum. The same color indicates the correspondence between the sample paths of the state process and the control process.}
	\label{fig:time_series}
\end{figure}

\section{Conclusions}\label{sec:conclusion}
In this paper, we reformulated KL control to make its assumption reasonable for continuous spaces and remove the approximation of dynamics.
Then, we analyzed the associated optimal control via the desirability function.
In particular, we showed that the reformulated KL control admits \magenta{sampling approximations of the desirability function.
We emphasize that the Bellman equation for the infinite horizon KL control can also be linearized by the same argument as in the finite horizon case, and the associated inverse reinforcement learning can be formulated as a convex optimization~\cite{Dvijotham2010}.
}
In addition, we revisited the original KL control and clarified that the assumption of controllability of transition densities implies the noncausality of controllers.
For linear systems with a quadratic state cost and Gaussian noise, we derived the optimal policy analytically.
Lastly, we illustrated our KL control via numerical examples.
Future work will focus on weakening Assumptions~\ref{ass:invertible}-(ii),(iii) by analyzing the problem without using densities.

\section*{Acknowledgements}
This work was supported in part by JSPS KAKENHI Grant Numbers JP21J14577, JP21H04875, and by JST, ACT-X Grant Number JPMJAX2102.

\section*{Disclosure statement}
No potential conflict of interest was reported by the authors.

\section*{Funding}
This work was supported in part by JSPS KAKENHI [grant
number JP21J14577, JP21H04875] and by JST, ACT-X [grant number
JPMJAX2102].









\bibliographystyle{tfnlm}
\bibliography{JCMSI_KL}




\end{document}